%\documentc lass[twocolumn,journal]{IEEEtran}
\documentclass[journal,twoside,web]{ieeecolor}
\usepackage{lcsys}
\def\BibTeX{{\rm B\kern-.05em{\sc i\kern-.025em b}\kern-.08em
    T\kern-.1667em\lower.7ex\hbox{E}\kern-.125emX}}
\markboth{\journalname, VOL. XX, NO. XX, XXXX 2025}
{Hierarchical Decentralized Stochastic Control for Cyber-Physical Systems}
\usepackage{amsmath,amsfonts,amssymb}
\usepackage{algorithmic}
\usepackage{array}
\usepackage{url}
\usepackage{graphicx}
\usepackage{textcomp}
\usepackage{multicol}
\usepackage{xcolor}
\usepackage{tikz}
\usepackage{breqn}

\usepackage{enumitem}

\usepackage{pbalance}

\usetikzlibrary{decorations.pathreplacing}
\newtheorem{theorem}{Theorem}%[section]
\newtheorem{corollary}{Corollary}%[section]
\newtheorem{lemma}{Lemma}%[section]
\newtheorem{definition}{Definition}
%[section]
\newtheorem{remark}{Remark}%[section]

\usetikzlibrary{positioning, arrows.meta}
\usepackage{mathtools}
\def\|#1\|{\left\|#1\right\|}
\def\argmax{\mathop{\rm argmax}}
\def\Rr{\mathbb{R}}

\def\bpi{\boldsymbol{\pi}}

\def\mcS{\mathcal{S}}

\def\mcA{\mathcal{A}}
\def\gs{\mathbf{s}}

\def\Expect{\mathbf{E}}

\def\boldPi{{\boldsymbol{\Pi}}}

\def\Bel{\mathcal{B}}
\def\boldpi{\boldsymbol{\pi}}
\def\mcV{\mathbf{V}}
\newcommand{\Ram}[1]{\textcolor{black}{#1}}
%\begin{document}
\begin{document}
\title{Hierarchical Decentralized Stochastic Control for Cyber-Physical Systems}
\author{Kesav Kaza$^*$, Ramachandran Anantharaman$^{*,\dagger}$ and Rahul Meshram
\thanks{K. Kaza is with the Department of Electrical Engineering and Computer Science, University of Ottawa, Canada (e-mail: kkaza@uottawa.ca), R. Anantharaman is with the Department of Electrical Engineering, Eindhoven University of Technology (TU/e), the Netherlands (e-mail: r.chittur.anantharaman@tue.nl) and R. Meshram is with the Department of Electrical Engineering, Indian Institute of Technology Madras, Chennai, India. (e-mail: {rahulmeshram}@ee.iitm.ac.in). R. Meshram is supported from IITM NFIG Grant and SERB grant Project No EEQ/2021/000812.}
\thanks{$*$ equal contributions, $\dagger$ corresponding author.}}

\maketitle

\begin{abstract}
This paper introduces a two-timescale hierarchical decentralized control architecture for Cyber-Physical Systems (CPS). 
\Ram{ The system consists of a global controller (GC), and $N$ local controllers (LCs). The GC operates at a slower timescale, imposing budget constraints on the actions of LCs, which function at a faster timescale. Applications can be found in energy grid planning, wildfire management, and other decentralized resource allocation problems. We propose and analyze two optimization frameworks for this setting: COpt and FOpt. In COpt, both GC and LCs together optimize infinite-horizon discounted rewards, while in FOpt the LCs optimize finite-horizon episodic rewards, and the GC optimizes infinite-horizon rewards. Although both frameworks share identical reward functions, their differing horizons can lead to different optimal policies. In particular, FOpt grants greater autonomy to LCs by allowing their policies to be determined only by local objectives, unlike COpt. To our knowledge, these frameworks have not been studied in the literature. We establish the formulations, prove the existence of optimal policies, and prove the convergence of their value iteration algorithms. We further show that COpt always achieves a higher value function than FOpt and derive explicit bounds on their difference. Finally, we establish a set of sufficient structural conditions under which the two frameworks become equivalent.}  
\end{abstract}

\section{Introduction}

Cyber-physical systems (CPSs) are large scale engineering systems consisting various computing, communication, and control units. They aim to achieve system performance objectives through the control of complex subsystems, and have applications in energy, healthcare, industrial automation, disaster management, defense, etc. \cite{Kim2012:CPSperspective,Rajkumar2010:CPSrevolution}.

\Ram{In this paper, we study a specific CPS architecture featuring two levels of decision-making: a Global Controller (GC) and multiple Local Controllers (LCs). These two layers reflect distinct functions of the system objective, where the LCs manage individual sub-processes through local decision rules, while the GC coordinates global performance across the entire system based on its collective state. We further assume that the two controllers operate on different timescales. The GC acts on a slower timescale, and its decisions impose budget constraints on the cumulative actions of the LCs, which operate on a faster timescale. We formulate this hierarchical decision problem using the Markov Decision Process (MDP) framework. }

{Consider the following motivational application for such a two-scale hierarchical formulation. Forest fires occur frequently in certain parts of the world and can have severe consequences on the livability of these regions, affecting temperature, water availability, air quality, and more. Consider a scenario in which a central agency is responsible for overseeing forest fire control across multiple regions. Each region also has local incident controllers with limited resources who are directly managing the situation on the ground. Due to logistical constraints, the central agency can actively monitor and allocate additional resources to only a limited number of regions at the same time. The local controllers can be modeled as agents allocating local resources according to a finite-horizon MDP with an objective of minimizing the total cost (to lives and property) incurred. They work at a faster time scale which is in the order of hours. The central agency also has the same objective but has to consider all the regions together, and works at a slower timescale, maybe in order of a half a day to few days. This problem was considered as an application to constrained restless bandits in \cite{kaza2024constrained}, where the authors used a two-state model to represent low or high danger in an area. They also provide a brief literature survey and details on wildfire spread models. However, their model assumes no decision-making agency at the local level and is also restricted by a two-state limitation. In this work, we provide two formulations of a hierarchical stochastic control framework which can be applied to achieve precise control in such scenarios.}
\begin{figure}[h]
\centering
\includegraphics[width = .9\columnwidth]{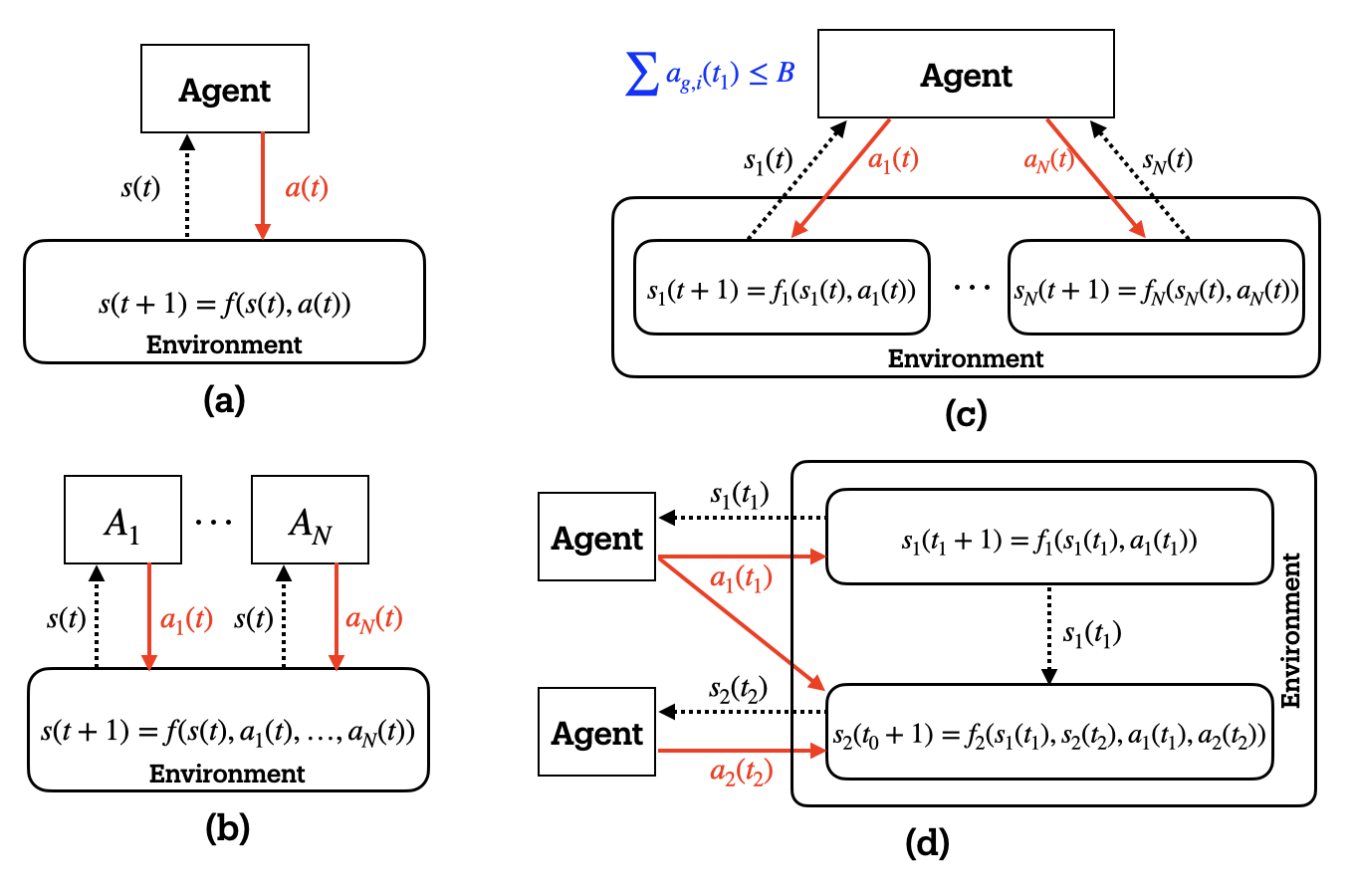}
\caption{A simplistic representation of some major categories of models found in the literature on decentralized control -- (a) single agent, (b) multi-agent--single-timescale, common environment, (c) weakly-coupled, independent sub-processes, (d) multi-timescale ($t_1 \not\equiv t_2$), layered architecture.}
\label{fig:models-literature}
\end{figure}
\subsection{Related work}
\Ram{In this section, we present literature on sequential decision problems with focus on MDP models that are relevant to our work, wherein we divide the relevant literature into four categories as follows. A summary of the models is provided in Fig.~\ref{fig:models-literature}} 

\noindent \textbf{Markov decision processes (MDPs)} are fundamental stochastic control models for sequential decision making. There is a single agent interacting with a environment which generates a state-based reward and the state evolves based on the agent's action according to a Markov process \cite{puterman2014markov}.

\noindent \textbf{Multi-agent models/Decentralized stochastic control (DSC)/Hierarchical control:} In this category, two or more agents interact with a common environment which can be either Markovian or deterministic. The large literature on multi-agent problems includes  decentralized control  \cite{Sandell1975:Survey-Decent-Hierar,Hsu1982:decentralized,mahajan2016decentralized} and team decision (multi-agent) problems with application to stochastic network controlled systems  \cite{ho1980team,yuksel2013stochastic}. 
DSC problems are known to be NP-hard \cite{tsitsiklis2003:complexity}, hence, much effort has been made to find efficient heuristics to solve them \cite{bertsekas2021:multiagent}. 
To analyze the decentralized multi-agent problems, assumptions on information structure are often made. It has been studied extensively in \cite{yuksel2013stochastic}. 
More recently, constrained multi-agent MDPs have been studied for various applications \cite{de2021constrained}.

\noindent \textbf{Weakly coupled MDP/RMAB:} 
 In the context of solving large MDPs efficiently, literature on weakly coupled MDPs (WC-MDPs) had developed with methods such as splitting large state spaces into weakly communicating parts or representing a large MDP as smaller seemingly independent concurrent processes~\cite{Meuleau:AAAI98:WCMDP}. Later, WC-MDPs have emerged as an independent area of study, due to various applications of the model \cite{hawkins2003langrangian, Adelman:2008:Relaxations}.
 The problem of restless multi-armed bandits (RMABs) can be called an independently evolved special case of WC-MDPs \cite{Whittle:1988:RMAB}. Here, the decision maker interacts with multiple independent sub-processes, under the constraint that simultaneously allocate resources to a fixed number of processes. The goal is to choose the optimal subset in each decision interval such that long term cumulative reward is maximized . Several variations of this model have been studied \cite{nino2001restless}, along with applications to dynamic scheduling in CPSs \cite{kaza2024constrained}.

\noindent \textbf{Multi-timescale hierarchical/layered control models:} Two layered multi-timescale model for a Markov and linear control system has been studied in \cite{forestier1978:multilayer}. Here, offline policies were proposed. Much later, \cite{Chang:2003:Multitime} built upon this work, and formulated the multi-timescale MDP. Exact and approximate solutions were discussed along with heuristic online policies.

Our model differs significantly from the above categories of literature. We study a multi-timescale multi-agent hierarchical stochastic control framework (Fig.~\ref{fig:timeline}), where there is a global controller (GC), defined as an MDP, and a set of local controllers (LCs) which are also MDPs managing independent sub-processes. Both of them act on different timescales. The actions of the LCs influence the process state directly. The actions of the global controller act as episodic budget constraints on the LC. Additionally, the GC operates under an allocation constraint. 

\subsection{Our contributions}
The main contributions of this paper are highlighted below. 
\begin{enumerate}[leftmargin=*]
\item We propose central and federal optimization frameworks for two-timescale, multi-agent hierarchical decentralized stochastic systems, and demonstrate the differences between the two frameworks. % present simple examples to demonstrate their non-equivalence.
\item We present a theoretical analysis for these frameworks, establish several relationships between the optimal value functions, and bound the difference between them. 
\item Finally, we present a set of sufficiency conditions under which the two frameworks achieve the same value.

\end{enumerate}

The paper is organized as follows. Section \ref{sec:SysDes} gives the system description, Section \ref{sec:Optframeworks} discusses the Central and Federal Optimization frameworks. Section \ref{sec: analysis} is dedicated to the analysis of the two frameworks, followed by the conclusion.

\section{System description and problem formulation}
\label{sec:SysDes}
Consider a hierarchical control system with $N$ local controllers that act on $N$ sub-processes, and a global controller that acts on the LCs. The GC operates at a slower timescale, indexed by $t_1$, while the LCs operate at a faster timescale indexed by $t_0$, with a time scale factor $T$, i.e., $t_0 = Tt_1 + t$, $t = 0,\dots,T-1$, as shown in Fig.~\ref{fig:timeline}(b).

\begin{figure}[t]
\begin{center}
\includegraphics[width = .85\columnwidth]{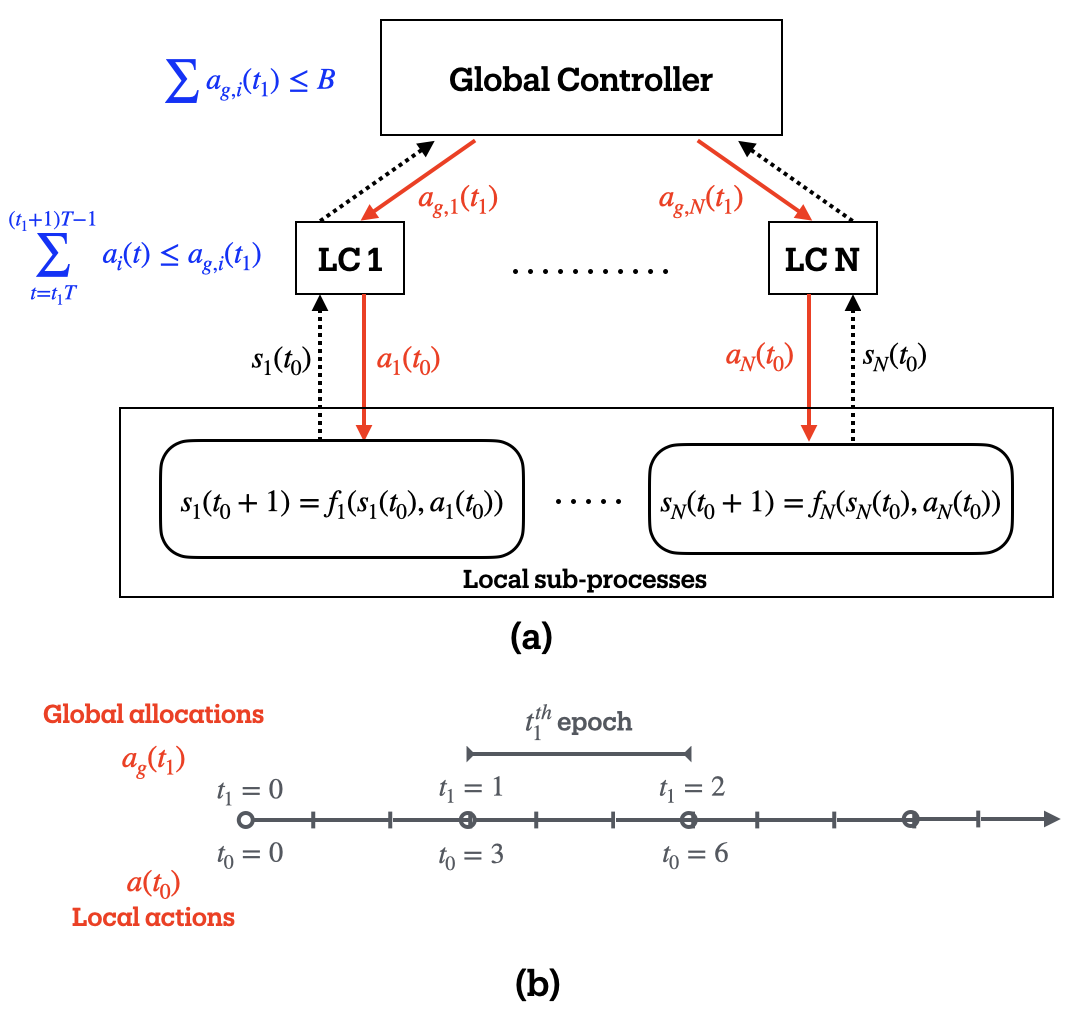}

\end{center}
\caption{(a) The hierarchical, multi-agent, multi-timescale model considered in this paper. (b) The timeline for a two timescale decision making with $T = 3$.}
\label{fig:timeline}
\end{figure}

\subsection{System Description}
\textbf{Sub-process:} The system contains \(N\) independent sub-processes \(S_i\), each associated with an LC, and each modeled as an MDP $(\mcS_i, \mcA_i, P_{i,a_i}, r_i)$, where $\mcS_i$ is its state space with cardinality $n_i$ and the set $\mcA_i$ is the action space with cardinality $ m_i$. 

Given a state-action pair $(s_i,a_i) \in \mcS_i \times \mcA_i$, $p_{i,a_i}^{lc}(s_i,s'_i)$ is the transition probability from state $s_i$ to $s'_i$ under action $a_i$ defined as 
\[
\small{ p_{i,a_i}^{lc}(s_i,s'_i) = \Pr(s_i(t_0+1) = s_i'| s_i(t_0) = s_i, a_i(t_0) =a_i))}.
\]
These probabilities form the entries of the transition matrix $P_{i,a_i} \in \mathbb{R}^{n_i \times n_i}$. $r_i:\mcS_i \times \mcA_i \to \Rr_+$ is the local reward for a state action pair $(s_i,a_i)$. 
    
\textbf{Global Controller:}
\Ram{The GC is modeled as an MDP $(\mcS^N, \mcA_G, \mathbf{P}, R)$,  where \(\mcS^N: \mcS_1\times \cdots \times \mcS_N\) is the cartesian product of all the local state spaces and the state of GC is denoted by $\gs = [s_1,\dots,s_N]$. Given a state $\gs(t_1)$, the GC decides an allocation $a_{g,i} \in \mcA_G$ for each sub-process, where $\mcA_g$ is the allocation set\footnote{To differentiate the control actions of local and global controllers, we use the term action for local controller and allocation for global controller} $\{0,1,\dots,K-1\}$, where $K \in \mathbb{Z}_+$.  The vector $a_g =[a_{g,1},\dots,a_{g,N}]$ contains the allocation for all the sub-processes.}  

\textbf{Global Policy:} At each epoch $t_1$, the global controller decides the allocation $a_g(t_1)$ according to policy $\Phi \ni \phi: \mcS^N \to \mcA_G$ such that $a_g(t_1) = \phi(\gs(t_1)).$
    
  \textbf{Local Policy:} In the $t_1^{th}$ decision epoch of the global controller, the $i^{th}$ local controller decides a sequence of $T$ control actions denoted by  
    $\mathbf{a}_i(t_1) = \{a_i(Tt_1), \dots, a_i(T(t_1+1)-1)\} $, where each $a_i(Tt_1+t) \in \mcA_i$. This is according to decision rule $\Pi_i \ni \pi_{i,t}: \mcS_i \to \mcA_i$ such that 
    \[
     %a_i(t_0) = a_i(Tt_1+t) = \pi_{i,t}(s_i(Tt_1+t)) \quad t = 0,\dots,T-1.
     a_i(Tt_1+t) = \pi_{i,t}(s_i(Tt_1+t)) \quad t = 0,\dots,T-1.
     \]
     We define the local policy as the sequence ${\boldpi}_i := \{\pi_{i,0},\dots,\pi_{i,T-1}\}$ and denote $\boldpi := [\bpi_1,\dots,\bpi_N]^T$, the vector of $T$-horizon local policies of all the subsystems, and we denote the space of all possible $\bpi$ as $\boldPi$.
     
\textbf{Global Allocation Constraint:} The global controller is constrained by an allocation budget \Ram{per epoch} \(B\), such that $\sum a_g(t_1) \le B$, $\forall\  t_1$ and $B < N(K-1)$. 
   
\textbf{Local Constraints:} The actions of the $i^{th}$ local controller over the $t_1^{\text{th}}$ epoch are constrained by the budget $a_{g,i}(t_1)$ such that $\sum_{t_0=t_1 T}^{(t_1+1)T-1}  a_i(t_0) \leq a_{g,i}(t_1)$. 
    
     The transition from a state $\gs(t_1) := \gs$ to $\gs(t_1+1) := \gs'$ under an allocation $a_g$ and local policy $\bpi$ is governed by the transition probability over an epoch, defined as 
\begin{align}
\label{eq:TP_EqMDP}
p^{ep}_{a_g,\bpi}(\gs,\gs') = Pr(\gs(t_1+1) = \gs' | \gs(t_1) = \gs, a_g, \bpi).
\end{align}

    In the above system definitions, we search for local policies $\pi_i$ that are non-stationary in the faster time scale $t_0$ (i.e., $\pi_{i,t} \neq \pi_{i,t^'}, t \neq t^'$) and stationary with respect to the slower time scale \(t_1\). Further, the global allocation policy $\phi$ is assumed to be stationary with respect to $t_1.$ The normed linear space of bounded real-valued functions over the state space $\mcS$ is denoted by $\mcV$.
     
\subsection{Problem Formulation}
Given an initial state $\gs \in \mcS^N$, the value of that state under a global allocation policy $\phi$ and local policies $\bpi$ is defined as the expected cumulative reward expressed by the following function
\small{\begin{align}
\label{eq:V_Def}
V^{\phi,\bpi}(\gs) = \begin{aligned} \Expect_{\phi,\bpi} \left[ \sum_{t_1=0}^{\infty} \beta^{t_1} R\bigg(\gs(t_1), \phi(\gs(t_1)), \bpi(\gs(t_1)) \bigg)
 \right],
\end{aligned}
\end{align}}
\normalsize{}where $R$ is the global reward function for one epoch. This is defined to be the value function of the state $\gs$. The reward function $R$ takes the following form
\small{\begin{align}
\label{eq:R_Def}
R(& \mathbf{s}(t_1),\phi(\gs(t_1)), \bpi(\gs(t_1))) = I_g(\mathbf{s}(t_1), \phi(\gs(t_1))) \nonumber\\ &+  \sum_{i=1}^{N}
     \Expect_{s_i,\pi_i} \left[ \sum_{t=0}^{T-1}\gamma^{t} r_i(s_i(t_1T+t),a_i(t_1T+t)) \right], 
\end{align}}
\normalsize{}
where 
$
a_g(t_1) = \phi(\gs(t_1)) \ ,\ a_i(t_1T+t) = \pi_{i,t}(s_i(t_1T+t)). 
$

Here, $I_g$ is the immediate global reward for a $(\gs,a_g)$ pair, and $r_i$ is the local reward function of the $i^{\text{th}}$ sub-process at each time slot. $\beta$ and $\gamma$ are the discount factors corresponding to the global and local controllers and are in the interval $(0,1)$. 
Further, it is reasonable to assume that when there is no allocation to a sub-process during an epoch, its expected reward for that is epoch is $0$, thereby imposing the following assumptions on the reward functions 
$
R(\cdot,\mathbf{0},\cdot) = 0   \mbox{ and } r_i(\cdot,0) = 0.
$

\section{Optimization frameworks}
\label{sec:Optframeworks}
Given the cumulative discounted reward $V^{\phi,\bpi}(\gs)$ as in \eqref{eq:V_Def}, we propose two optimization frameworks for computing the policies $\phi$ and $\bpi$. These frameworks fundamentally differ in the way the optimization problems for the local policies $\bpi$ are formulated. The Federal Optimization framework (abbreviated as FOpt) uses the local policy that maximizes the cumulative local reward of each sub-process over a single epoch, while the Central Optimization framework (abbreviated as COpt) computes the local policies that optimizes the infinite-horizon discounted reward. The FOpt framework, by its formulation, provides more autonomy to the local controllers in the control hierarchy, as the optimization for local policies involves maximizing the rewards corresponding to that specific sub-process alone. We formally define the two frameworks and provide an illustrative example explaining how these frameworks can result in different policies for the same system. 

\subsection{Federal Optimization Problem (FOpt)} 
Given the expected cumulative reward as in \eqref{eq:V_Def}, we define the \textit{federal} optimization problem (FOpt) for computing the policies $\phi$ and $\bpi$ as follows:
In this framework, the objective of the global controller is to compute policy $\phi$ such that 

\begin{align}
\label{eq:ValFunD}
\small{\begin{aligned}
V_F^*(\gs) &= \max_{\phi \in \Phi} \Expect_{\gs,\phi} \left[ \sum_{t_1=0}^{\infty} \beta^{t_1} R\bigg(\gs(t_1), \phi(t_1), \bpi_\phi(\gs(t_1))\bigg)\right] \\
s.t.\ \  &\ \ \mathbf{1}^T \phi(\gs(t_1)) \leq B, \forall\ t_1,
\end{aligned}}
\end{align}
where $\bpi_\phi = [\pi^\phi_{1},\dots,\pi^{\phi}_{N}]$ are the local optimal policies for a given global allocation policy $\phi$ defined through the following local optimization problem
\small{\begin{align}
\label{eq:LocOptPolicy}
\pi_{i}^\phi(s_i) &= \argmax_{\pi_i \in \Pi_i} \Expect_{s_i,\pi_i} \left[ \sum_{t=0}^{T-1}\gamma^{t} r_i(s_i(t_1T+t),{\pi_{i,t}}(s_i(t_1T+t))) \right] \nonumber \\
& \mbox{s.t.} \sum_{t=0}^{T-1} \pi_{i,t}(s_i(t_1T+t)) \leq e_i^T \phi(\gs(t_1)), 
\end{align}}
\normalsize{}
where $e_i$ is the $i^{th}$ unit vector. The optimal global allocation policy $\phi^*_F$ is defined as 
\begin{align}
\label{eq:FedGlobal}
\phi_F^* = \argmax_{\phi \in \Phi} V_F^*,
\end{align}
where $V_F^*$ is defined as in equation \eqref{eq:ValFunD} and the local optimal policy $\bpi_F^*$ is given by $\pi_{F,i}^* = \pi_{\phi_F^*,i}$, where $\pi_{\phi_F^*,i}$ is defined as in equation \eqref{eq:LocOptPolicy} for the optimal global allocation $\phi^*_F$. 

\Ram{It can be seen that local policies of the FOpt framework maximizes the reward 
\begin{align}
\label{eq:LocalReward}
R_i(s_i,\pi_i^{\phi}) := \Expect_{s_i,\pi_i^{\phi}}\left[\sum_{t=0}^{T-1} \gamma^t r_i(s_i(t),\pi_{i}^{\phi}(s_i(t)))\right]
\end{align}
 subject to the constraints in \eqref{eq:LocOptPolicy}. From the view of the global controller, the optimal local policies $\pi_{\Phi_F^*,i}$  maximizes $R_i(s_i,\pi_i^{\phi_F^*})$ over a single epoch and can be viewed as a myopic policy (over that epoch). Given any global allocation $\phi$, we formally define the local myopic policy over an epoch as the $T$-myopic policy. } 
\Ram{
\begin{definition}[$T$-myopic policy]
A local policy $\pi_i^{\phi}$ under an allocation $\phi$ is said to be $T$-myopic if it chooses actions $\mathbf{a}_i$ that maximizes the immediate expected local reward $R_i$ as defined in \eqref{eq:LocalReward} of that epoch.
\end{definition}}

\subsection{Central Optimization Problem (COpt)}
A natural alternative for the federal framework considers a single controller that determines both the global allocation and local policies together as a single optimization problem. We term it as the \textit{central} optimization problem (COpt) where the overall objective is to compute $\phi,\bpi$ such that

{\small{\begin{align}
\label{eq:ValFunCent}
V_C^{*}(\gs) &:= \max_{\phi \in \Phi}\max_{\bpi \in \Pi} V^{\phi,\bpi}(\gs) \nonumber \\ &=  
\max_{\phi \in \Phi}\max_{\bpi \in \Pi} \Expect_{\gs,\phi,\bpi} \left[ \sum_{t_1=0}^{\infty} \beta^{t_1} R\bigg(\gs(t_1), \phi(\gs(t_1)), \bpi(\gs(t_1))\bigg)\right] \nonumber \\ 
    & \mbox{s.t.} \begin{cases}  \mathbf{1}^T \phi(\gs(t_1)) \leq M , & 
     \forall\ t_1,  \\
      \sum\limits_{t=t_1 T}^{(t_1+1)T-1} \pi_i(s_i(t)) \  \leq \ e_i^T \phi(\gs(t_1))  , & \forall \ i,\ \forall \ t_1, 
     \end{cases}
\end{align}} }
\normalsize{} \Ram{where $R$ is defined as in \eqref{eq:R_Def}}. The central optimal allocation policy $\phi_C^*$ and local policies $\bpi_C^*$ are defined as 
\begin{align}
\{\phi_C^*,\bpi_C^*\} = \argmax V^*_C.
\end{align}
where $V_C^*(\gs)$ is defined as in equation \eqref{eq:ValFunCent}.

\subsection{Non-equivalence of Central and Federal frameworks}
\Ram{The two problem formulations fundamentally differ in the way the optimal policy is computed. In the COpt, the optimal policy $\phi_C^*, \bpi_C^*$ are computed as solutions of a nested infinite horizon optimization problem, while the FOpt allows the local controllers to compute their own policy $\pi_{F,i}^*$ as a solution to a finite horizon optimization problem, leading to a local and distributed computation of the policies $\bpi_F^*$, thereby providing autonomy to the local controllers. The above two frameworks lead to different optimal decision rules, making them two distinct non-equivalent stochastic control problems as illustrated in the following example.} 

\Ram{Consider the simple case of the system with a single sub-process (N=1) operating at a single timescale, i.e. $T=1$ and $t_0\equiv t_1$. Let the state space of the sub-process be $\mcS=\{0,1\}$, and the action space be $\mcA = \{0,1\}$. Considering the budget per epoch to be $1$ (i.e. $B=1$) and an equality constraint on the global allocation (i.e. $\sum a_g(t_1) = 1$), and the fact that there is only one sub-process, the allocation $a_g$ at any arbitrary time $t_1$ is trivial, i.e. $a_g(t_1)=1$. So the only computation required is for the optimal local action $a_1(t_1)$. Under FOpt, the local optimal policy in \eqref{eq:LocOptPolicy} maximizes the immediate reward at each time $t_1$, leading to a greedy/myopic policy. Assume $I_g=0$ and the local reward function to be $r_1(s_1,a_1) = \begin{bmatrix} 0.25 & 0.5 \\ 0.75 & 1 \end{bmatrix}$ for $s_1\in\mcS$ and $a_1\in\mcA$.} 

\subsubsection{Example 1 - Non-equivalent case}
 \Ram{Let the transition probability matrices be chosen as $P_{a_1=0} = \begin{bmatrix} 0.2 & 0.8 \\ 0.2 & 0.8 \end{bmatrix}$, $P_{a_1=1} = \begin{bmatrix} 0.8 & 0.2 \\ 0.4 & 0.6 \end{bmatrix}$.  
 The optimal local policy under COpt is the solution the infinite horizon MDP in \eqref{eq:ValFunCent}. Here, the value function can be computed using the value iteration algorithm (see \cite[Chapter 6]{puterman2014markov}) to be $[74.5, 75.1]^T$, and the corresponding optimal policy is computed to be $\pi_C^*(s_1=0) = 0,$ $\pi_C^*(s_1=1)=1$. As mentioned earlier, the local optimal policy under FOpt is the myopic policy which just maximizes the immediate reward, and is computed to be $\pi^*_F(s_1=0)=1,$ $\pi^*_F(s_1=1)=1$. The corresponding value function under this policy can be computed through Monte Carlo simulation, by averaging over randomly generated state and action sequences using the transition probability matrices, while the using the policy $\pi^*_F$. This value function is computed to be $[66.3,67.2]^T$. } %\Ram{can we give the value functions too?}.

\subsubsection{Example 2 - Equivalent case}
\label{subsubsec : Ex2 - Equivalent case}
\Ram{ However, for different values of transition probabilities, the two frameworks can result in the same optimal policies of both local and global controllers. For the system discussed above, consider the transition probability to be $P_{a_1=0} = \begin{bmatrix} 0.5 & 0.5 \\ 0.5 & 0.5 \end{bmatrix}$, $P_{a_1=1} = \begin{bmatrix} 0.2 & 0.8 \\ 0.2 & 0.8 \end{bmatrix}$. Under these probabilities, 
it can be seen that both FOpt and COpt frameworks lead to the same policies at the local level with $\pi^*(s_1=0)=1$ and $\pi^*(s_1=1)=1$. The value function under this policy is computed to be $[89.60, 90.08]$. Kindly note that the optimal global allocation is $a_g(t_1) = 1$ by design.}

\section{Analysis and Results}
\label{sec: analysis}
\Ram{The above example illustrates that these two frameworks are non-equivalent by formulation. Additionally, the two-time scale formulation makes it non-trivial from the existing MDP literature and necessitates the investigations on the existence of optimal value functions and their uniqueness for COpt and FOpt frameworks.} In this section, we prove the existence and uniqueness of the optimal value functions, define value iteration algorithms, and provide comparative results between the two frameworks and specifically bound the difference between the two value functions. %Finally, we provide sufficient conditions on the system such that both frameworks yield the same value function. 

\Ram{The initial results correspond to the existence of the optimal global and local policies for the COpt and FOpt formulations. In particular, we define the Bellman operator on the space of value functions and prove it to be a contraction mapping, leading to the existence of a fixed point that corresponds to the optimal value function.}

\begin{theorem}
\label{th:COpt}
For all $\gs\in \mcS^N$, there exists an unique $V_C^* \in \mcV$ satisfying
\begin{align}
\label{eq:COpt_Bellman}
\begin{aligned}
V_C^*(\gs) &= \max_{a_g \in \mcA_g^B} \max_{\bpi \in \mathbf{\Pi}_{a_g}} \bigg\lbrace  R(\gs,a_g,\bpi(\gs))   \\ &   +\beta \sum_{\gs' \in  \mcS^N} p^{ep}_{a_g,\bpi}(\gs,\gs{'}) V_C^*(\gs')   \bigg\rbrace.
\end{aligned}
\end{align}
\end{theorem}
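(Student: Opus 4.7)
The plan is to recognize that the COpt problem, although originally defined over a two-timescale horizon, can be recast as an equivalent single-timescale discounted MDP on the slower timescale $t_1$, and then apply the standard Banach fixed point argument from the classical theory of discounted MDPs (as in Puterman). The equivalent MDP has state space $\mcS^N$, a compound action at each state consisting of an admissible allocation $a_g \in \mcA_g^B$ together with a $T$-horizon local policy tuple $\bpi \in \boldPi_{a_g}$ satisfying the induced per-agent budget constraints, a per-epoch reward $R(\gs,a_g,\bpi(\gs))$ as defined in \eqref{eq:R_Def}, transition probabilities $p^{ep}_{\gs,\gs'}(a_g,\bpi)$ as in \eqref{eq:TP_EqMDP}, and discount factor $\beta \in (0,1)$.

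First I would define the Bellman operator $\mathcal{T}: \mcV \to \mcV$ by
\begin{align*}
(\mathcal{T}V)(\gs) = \max_{a_g \in \mcA_g^B}\max_{\bpi \in \boldPi_{a_g}} \Bigl\{ R(\gs,a_g,\bpi(\gs)) + \beta \sum_{\gs' \in \mcS^N} p^{ep}_{\gs,\gs'}(a_g,\bpi)\, V(\gs') \Bigr\}.
\end{align*}
I would then verify that $\mathcal{T}$ is well-defined as a self-map of $\mcV$: since $\mcS^N$, $\mcA_g$, and each $\mcA_i$ are finite, the set of admissible compound actions at each $\gs$ is finite, so the maxima are attained; the per-epoch reward $R$ is bounded by a geometric sum of bounded local rewards plus the bounded global reward $I_g$, so $\mathcal{T}V$ is bounded whenever $V$ is. Boundedness of the reward also ensures the infinite-horizon sum in \eqref{eq:ValFunCent} converges absolutely, so the supremum over policies lies in $\mcV$.

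Next I would establish that $\mathcal{T}$ is a $\beta$-contraction in the sup-norm. This follows from the standard argument: for any $V_1, V_2 \in \mcV$, using the inequality $|\max_x f(x) - \max_x g(x)| \leq \max_x |f(x)-g(x)|$ twice (once for $\bpi$, once for $a_g$), and the fact that $p^{ep}_{\gs,\cdot}(a_g,\bpi)$ is a probability distribution, one obtains $\|\mathcal{T}V_1 - \mathcal{T}V_2\|_\infty \leq \beta \|V_1 - V_2\|_\infty$. Since $(\mcV,\|\cdot\|_\infty)$ is a Banach space, the Banach fixed point theorem then delivers a unique $V_C^* \in \mcV$ satisfying $\mathcal{T}V_C^* = V_C^*$, which is exactly \eqref{eq:COpt_Bellman}.

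Finally I would argue that this unique fixed point coincides with the optimal value function defined in \eqref{eq:ValFunCent}. Denoting the latter by $V^\star$, the standard argument shows (i) $V^\star \leq \mathcal{T}V^\star$ by considering one-step improvements, and (ii) $V^\star \geq \mathcal{T}V^\star$ by policy construction from a maximizer in the Bellman operator and iterating, combined with the $\beta$-geometric decay of the tail to pass to the limit; both directions together give $V^\star = \mathcal{T}V^\star$, and uniqueness forces $V^\star = V_C^*$. The main technical obstacle I anticipate is bookkeeping the constrained action set $\boldPi_{a_g}$ carefully, ensuring that the maximizers remain admissible under the per-agent constraint $\sum_{t} \pi_{i,t} \leq e_i^T a_g$ and that the compound action set is nonempty for every allowed $a_g$; once this measurability/admissibility issue is handled (which is trivial in the finite setting), the contraction and fixed point portions are routine.
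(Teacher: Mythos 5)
Your proposal is correct and follows essentially the same route as the paper: recast COpt as a single-timescale discounted MDP on the slow clock with compound actions $(a_g,\bpi)$, define the Bellman operator, and invoke the contraction/Banach fixed-point machinery on the finite state space with $\beta<1$. The only difference is that you spell out the contraction estimate and the identification of the fixed point with the value in \eqref{eq:ValFunCent} explicitly, whereas the paper delegates both to the cited results in Puterman.
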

\begin{proof}
Equation \eqref{eq:COpt_Bellman} can be viewed as a nested optimization problem on global allocation $a_g$ and local policy $\bpi$. For a pair of $a_g, \bpi$, the reward function $R(\gs,a_g,\bpi(\gs))$ is bounded, and $\beta < 1$. The Bellman operator $\Bel_C: \mcV \to \mcV$ for the COpt formulation is given by
\begin{align}
\begin{aligned}
\Bel_C V_C(\gs) = \max_{a_g \in \mcA_g^B}& \left\lbrace \max_{\bpi \in \mathbf{\Pi}_{a_g}} \Bigg[ R(\gs,a_g,\bpi(\gs))
\right. \\  & \left. +\beta \sum_{\gs' \in  \mcS^N} p^{ep}_{a_g,\bpi}(\gs,\gs{'}) V_C(\gs')  \Bigg] \right\rbrace.
\end{aligned}
\label{eq:BEL_COpt}
\end{align}

Since the state space $\mcS$ is finite and $\beta < 1$, \cite[Proposition 6.2.4 and Theorem 6.2.5]{puterman2014markov} guarantee that $\Bel_C$ is a contraction mapping on $\mcV$ and hence a fixed point $V_C^*$ exists.
\end{proof}
\begin{theorem}
\label{th:FOpt}
For all $\gs\in \mcS^N$, there exists an unique $V_F^* \in \mcV$ satisfying
\begin{align}
\begin{aligned}
V_F^*(\gs) &= \max_{a_g \in \mcA_g}\bigg \lbrace R(\gs,a_g,\bpi^{a_g}_{F}) \\ &  +\beta \sum_{\gs' \in  \mcS^N} p^{ep}_{a_g,\bpi_F^{a_g}}(\gs,\gs{'}) V_F^*(\gs')   \bigg\rbrace.
\end{aligned}
\label{eq:Fopt_Bellman}
\end{align}
\normalsize{}
where $\bpi^{a_g}_F$ is the FOpt policy corresponding to the allocation $a_g$.
\end{theorem}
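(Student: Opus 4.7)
The plan is to mirror the contraction-mapping argument used in Theorem~\ref{th:COpt}, but with the key observation that in the FOpt framework the local policy $\bpi_F^{a_g}$ is \emph{not} a decision variable in the outer (global) Bellman recursion: it is pre-computed for each candidate global allocation $a_g$ by solving the finite-horizon local problem \eqref{eq:LocOptPolicy} of each sub-process independently. Thus the outer problem reduces to an ordinary discounted MDP over $\mcS^N$ whose action set is $\mcA_g$, effective one-step reward is $R(\gs,a_g,\bpi_F^{a_g})$, and transition kernel is $p^{ep}_{\gs,\gs'}(a_g,\bpi_F^{a_g})$. The Bellman operator to analyze is
\begin{align*}
\Bel_F V(\gs) = \max_{a_g \in \mcA_g}\Bigl\{ R(\gs,a_g,\bpi_F^{a_g}) + \beta \sum_{\gs' \in \mcS^N} p^{ep}_{\gs,\gs'}(a_g,\bpi_F^{a_g})\, V(\gs')\Bigr\}.
\end{align*}

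First I would verify that $\bpi_F^{a_g}$ is well defined for every feasible $a_g$. For each sub-process $i$, \eqref{eq:LocOptPolicy} is a finite-horizon ($T$-stage) constrained MDP over the finite state space $\mcS_i$ and finite action space $\mcA_i$. Because the horizon, state and action spaces are finite and $r_i$ is bounded, the backward-induction argument of \cite[Chapter 4]{puterman2014markov} yields an optimal deterministic policy; ties may be broken by the ordering $<_{\mcA_i}$ so that $\bpi_F^{a_g}$ is a single well-defined map. Consequently $R(\gs,a_g,\bpi_F^{a_g})$ and $p^{ep}_{\gs,\gs'}(a_g,\bpi_F^{a_g})$ are well-defined functions of $(\gs,a_g)$ alone.

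Next I would show $\Bel_F$ is a $\beta$-contraction on the Banach space $\mcV$ under the sup-norm. Since $\mcS^N$ and $\mcA_g$ are finite, the maxima in $\Bel_F V(\gs)$ are attained. For $V_1,V_2 \in \mcV$, a standard argument applied to the pointwise maximum over $a_g$ gives
\begin{align*}
|\Bel_F V_1(\gs) - \Bel_F V_2(\gs)| \le \beta \max_{a_g}\sum_{\gs'} p^{ep}_{\gs,\gs'}(a_g,\bpi_F^{a_g})\,|V_1(\gs') - V_2(\gs')| \le \beta\, \|V_1 - V_2\|_\infty,
\end{align*}
using that $p^{ep}(a_g,\bpi_F^{a_g})$ is a stochastic matrix. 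Since $\beta < 1$, the Banach fixed-point theorem yields a unique $V_F^* \in \mcV$ satisfying \eqref{eq:Fopt_Bellman}, exactly as in \cite[Proposition 6.2.4 and Theorem 6.2.5]{puterman2014markov}.

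The main conceptual obstacle, and the only place where this proof differs substantively from that of Theorem~\ref{th:COpt}, is justifying that collapsing the inner $\bpi$-maximization into a pre-computed mapping $a_g \mapsto \bpi_F^{a_g}$ is legitimate before the outer fixed point is known. This is sound here because the local objective in \eqref{eq:LocOptPolicy} depends only on $(a_g,s_i)$ and the local reward $r_i$ over the current epoch, with no coupling to $V_F^*$; hence $\bpi_F^{a_g}$ can be constructed a priori and treated as a fixed input to the outer MDP, making the contraction argument go through without circularity.
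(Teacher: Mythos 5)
Your proposal is correct and follows essentially the same route as the paper: establish that each $\bpi_F^{a_g}$ is a well-defined deterministic optimal policy of the finite-horizon local problem (the paper cites \cite[Proposition 4.4.3]{puterman2014markov}), so that $R(\gs,a_g,\bpi_F^{a_g})$ and $p^{ep}_{\gs,\gs'}(a_g,\bpi_F^{a_g})$ become fixed data for an ordinary discounted MDP over $\mcS^N$ with action set $\mcA_g$, and then apply the standard contraction-mapping argument to the resulting Bellman operator. Your added remarks on tie-breaking and on the absence of circularity in pre-computing $a_g \mapsto \bpi_F^{a_g}$ are just more explicit versions of steps the paper takes implicitly.
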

\begin{proof}
For every allocation $a_g$ of the GC, the local optimization \eqref{eq:LocOptPolicy} is a finite horizon MDP with finite state and action spaces,  \cite[Proposition 4.4.3]{puterman2014markov} guarantees the existence of a deterministic local optimal policy $\bpi^{a_g}_F$ and it can be computed using any policy evaluation algorithm. Given the optimal local policies $\bpi_F^{a_g}$, the reward function $R(\gs,a_g,\bpi^{a_g}_F)$ is unique and bounded for every $(\gs,a_g)$ pair. The Bellman operator $\Bel_F$ for the FOpt problem can be defined as 
\small{\begin{align}
\label{eq:Bel_FOpt}
\begin{aligned}
\Bel_F V_F(\gs) = \max_{a_g \in \mcA_g^B}  &\Bigg[ R(\gs,a_g,\bpi_F^{a_g}(\gs)) \\ 
&+\beta \sum_{\gs' \in  \mcS^N} p^{ep}_{a_g,\bpi_F^{a_g}}(\gs,\gs{'}) V_F(\gs')  \Bigg]. 
\end{aligned}
\end{align}}
\normalsize{}
Additionally, given $\beta < 1$, and the state space $\mcS$ is finite, from  \cite[Proposition 6.2.4, Theorem 6.2.5]{puterman2014markov}, we claim that $\Bel_F$ is a contraction mapping on $\mcV$ and guarantee the uniqueness of $V_F^*$.
\end{proof}
\begin{remark}
Rewriting $\Bel_F$ as
\small{\begin{align*}
\Bel_F V_F(\gs) = \max_{a_g \in \mcA_g^B}& \left\lbrace \max_{\bpi \in \mathbf{\Pi}_{a_g}} \Bigg[ R(\gs,a_g,\bpi(\gs))
\Bigg] \right. \\  & \left. +\beta \sum_{\gs' \in  \mcS^N} p^{ep}_{a_g,\bpi_F^{a_g}}(\gs,\gs{'}) V_F(\gs')  \right\rbrace,
\end{align*}}
\normalsize{}
we immediately see the difference between the COpt \eqref{eq:BEL_COpt} and FOpt formulations, where the former is optimizing both the local and global policies over infinite horizon, while the later optimizes only the global policy over infinite horizon.
\end{remark}
\begin{remark}
From Theorems \ref{th:COpt} and \ref{th:FOpt}, a value iteration algorithm for computing the optimal value functions can be formulated as 
\begin{align}
\label{eq:ValIter}
\begin{aligned}
V_C^{(m+1)}(\gs) &= \Bel_C V_C^{(m)}(\gs), \\ 
V_F^{(m+1)}(\gs) &= \Bel_F V_F^{(m)}(\gs).
\end{aligned}
\end{align}
Starting from initial values $V_C^{(0)}$ and $V_F^{(0)}$, from Theorem 6.3.1 \cite{puterman2014markov}, we have $V_C^{(m)}  \to V_C^*$ and $V_F^{(m)} \to V_F^*$. 
\end{remark}

\Ram{Given that the two optimization frameworks are well defined, leading to unique value functions, the following results focus on the comparative analysis between the two. In particular, we} bound the difference between the two value functions (i.e., $||V_C^*(\gs)-V_F^*(\gs)||$). \Ram{This can be viewed as the \emph{cost of autonomy}, the cost incurred for using a federal structure rather than a centralized decision framework.} The following lemma gives the relationship between the two value functions.
\begin{lemma}
\label{lem:Dec-Cent1}
Given an optimal policy $\{\phi_C^*, \bpi_C^*\}$ of COpt and an optimal policy $\{\phi_F^*, \bpi_F^*\}$ of FOpt, we have
\[
V^*_F(\gs) \leq V^*_C(\gs).
\]
\end{lemma}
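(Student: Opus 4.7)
The plan is to show that the inequality reduces to the elementary observation that maximizing over a strictly smaller (singleton for each $\phi$) feasible set cannot exceed maximizing over the full feasible set. The key is to notice that in both formulations the joint feasibility region over $(\phi,\bpi)$ is the same—both require $\mathbf{1}^T\phi(\gs) \leq B$ and, for each fixed $\phi$, the per-epoch budget constraint $\sum_{t} \pi_{i,t}(s_i(\cdot)) \leq e_i^T \phi(\gs)$ on the local policies. The only difference between the two frameworks is which element of this region gets selected: COpt picks the $\bpi$ that maximizes the infinite-horizon $V^{\phi,\bpi}$, while FOpt picks $\bpi_\phi^F$ from \eqref{eq:LocOptPolicy}, i.e., the per-epoch local-reward maximizer.

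First I would fix an arbitrary admissible $\phi \in \Phi$ and denote by $\Pi_\phi \subset \boldPi$ the set of local policies that satisfy the per-epoch budget constraint induced by $\phi$. By construction of \eqref{eq:LocOptPolicy}, $\bpi_\phi^F \in \Pi_\phi$. Since $V^{\phi,\bpi}(\gs)$ in \eqref{eq:V_Def} is a well-defined real number for every $(\phi,\bpi)$ in the feasibility region (boundedness of $R$ and $\beta<1$), the trivial inequality
\[
V^{\phi,\bpi_\phi^F}(\gs) \ \leq\ \max_{\bpi \in \Pi_\phi} V^{\phi,\bpi}(\gs)
\]
holds pointwise in $\gs$.

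Next I would take the supremum over $\phi \in \Phi$ satisfying $\mathbf{1}^T \phi \leq B$ on both sides. The left-hand side becomes exactly $V_F^*(\gs)$ by the definition \eqref{eq:ValFunD}--\eqref{eq:FedGlobal}, and the right-hand side becomes $V_C^*(\gs)$ by \eqref{eq:ValFunCent}, because nesting two maxima over independent variables equals the joint maximum. Theorems~\ref{th:COpt} and~\ref{th:FOpt} guarantee that both optima are attained, so all the suprema can be written as maxima and no measurability issue arises.

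The main thing to be careful about is purely bookkeeping: verifying that the constraint set on $\phi$ is identical in both problems (both use the same global budget $B$, modulo the notational $M$ in \eqref{eq:ValFunCent}), and that, conditional on $\phi$, the local constraints on $\bpi$ coincide. Once this matching is checked, the lemma is immediate—there is no need to invoke the Bellman operators or any contraction argument. The result is essentially the statement that constraining the local controllers to a myopic, per-epoch optimization can only reduce the attainable infinite-horizon reward, which is the precise sense in which $V_C^* - V_F^*$ deserves to be called a \emph{cost of autonomy}.
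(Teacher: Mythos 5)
Your proposal is correct and is essentially the same argument as the paper's: the FOpt local policy $\bpi_\phi$ is one feasible point of COpt's inner maximization over $\bpi$ for each fixed $\phi$, so the inequality follows by enlarging the feasible set and then maximizing over $\phi$. Your extra bookkeeping about the feasible sets coinciding (including the $B$ versus $M$ notational mismatch in \eqref{eq:ValFunCent}) is a reasonable clarification but does not change the route.
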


\begin{proof}
 From the definition of FOpt \eqref{eq:ValFunD}, we have
\small{\begin{align*}
V^*_F(\gs) &= \max_{\phi \in \Phi} \Expect_{\gs,\phi} \left[ \sum_{t_1=0}^{\infty} \beta^{t_1} R(\gs(t_1), \phi(\gs(t_1)), \bpi_\phi(\gs(t_1)))\right] 
\end{align*}
\begin{align*}
&\leq \max_{\phi \in \Phi} \max_{\bpi \in \Pi} \Expect_{\gs,\phi} \left[ \sum_{t_1=0}^{\infty} \beta^{t_1} R(\gs(t_1), \phi(\gs(t_1)), \bpi(\gs(t_1)))\right] \\ 
& = V_C^*(\gs).
\end{align*}}
\normalsize{}
The inequality is due to the fact that the COpt framework searches over a larger space $\phi,\bpi \in \Phi \times \Pi$ while the FOpt framework searches over $\phi, \bpi_\phi$, where $\bpi_\phi$ is fixed for a given $\phi$ according to equation \eqref{eq:LocOptPolicy}, thereby making the search space for policies constrained than the COpt framework.
\end{proof}

\begin{theorem}
Given the value functions $V^*_F(\gs)$ and $V^*_C(\gs)$ and the corresponding optimal policies $\{\phi_F^*,\bpi_F^*\}$ and $\{\phi_C^*,\bpi_C^*\}$ of the FOpt and COpt problems, we have
\begin{align}
\label{eq:TempT31}
V^{\phi_C^*,\bpi_{\phi_C^*}}(\gs) \leq V_F^*(\gs) \leq V_C^*(\gs).
\end{align}
where $\bpi_{\phi_C^*}$ is the solution of the local optimization problem (\ref{eq:LocOptPolicy}) under the optimal global allocation policy $\phi_C^*$ of the COpt problem. %(i.e., the federal optimization). 
Additionally,
\begin{align}
||V_C^*(\gs) - V_F^*(\gs) || \leq || V_C^*(\gs)-V^{\phi_C^*,\bpi_{\phi_C^*}}(\gs)||.
\end{align}

\end{theorem}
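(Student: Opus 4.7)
The plan is to chain together two observations: (a) Lemma~\ref{lem:Dec-Cent1} already gives the upper inequality $V_F^*(\gs)\le V_C^*(\gs)$, and (b) the quantity $V^{\phi_C^*,\bpi_{\phi_C^*}}(\gs)$ corresponds to a particular \emph{feasible} pair for the FOpt formulation, so it must be dominated by the FOpt optimum.

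For the lower inequality $V^{\phi_C^*,\bpi_{\phi_C^*}}(\gs)\le V_F^*(\gs)$, I would unpack the definition of FOpt in \eqref{eq:ValFunD}: the FOpt optimal value is the maximum over global allocation policies $\phi\in\Phi$ of the expected discounted return when the local controllers respond with $\bpi_\phi$ as defined in \eqref{eq:LocOptPolicy}. Since $\phi_C^*$ is a valid element of $\Phi$ (it obeys the budget constraint as an optimizer of COpt, whose feasible set for $\phi$ coincides with that of FOpt up to the difference between $B$ and $M$, which I would note and, if needed, take to be the same), plugging $\phi=\phi_C^*$ into \eqref{eq:ValFunD} gives exactly $V^{\phi_C^*,\bpi_{\phi_C^*}}(\gs)$, and this is upper bounded by the maximum $V_F^*(\gs)$. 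Combining with Lemma~\ref{lem:Dec-Cent1} yields the sandwich \eqref{eq:TempT31}.

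For the norm bound, I would argue pointwise first. From the sandwich we obtain, for every $\gs\in\mcS^N$,
\begin{equation*}
0 \;\le\; V_C^*(\gs)-V_F^*(\gs) \;\le\; V_C^*(\gs)-V^{\phi_C^*,\bpi_{\phi_C^*}}(\gs),
\end{equation*}
where the left inequality is Lemma~\ref{lem:Dec-Cent1} and the right inequality uses $V^{\phi_C^*,\bpi_{\phi_C^*}}(\gs)\le V_F^*(\gs)$. Since $\mcV$ is the space of bounded real-valued functions on $\mcS^N$ equipped with the supremum norm (and the sup norm is monotone on nonnegative functions), taking $\sup_{\gs}$ on both sides preserves the inequality, giving $\|V_C^*-V_F^*\|\le\|V_C^*-V^{\phi_C^*,\bpi_{\phi_C^*}}\|$.

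There is no real obstacle here beyond book-keeping: the only subtlety worth flagging is that $V^{\phi_C^*,\bpi_{\phi_C^*}}$ must be interpreted as the FOpt return when the global policy is \emph{forced} to be $\phi_C^*$ (so the local policy is the $T$-myopic $\bpi_{\phi_C^*}$ obtained from \eqref{eq:LocOptPolicy}), rather than the COpt value under $\phi_C^*$; writing this out carefully makes the lower bound immediate as a ``feasible solution is no better than the optimum'' argument. The final norm inequality then follows purely from monotonicity and does not require any contraction or fixed-point reasoning.
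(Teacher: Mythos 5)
Your proposal is correct and follows essentially the same route as the paper's proof: the upper bound comes from Lemma~\ref{lem:Dec-Cent1}, and the lower bound comes from observing that $\{\phi_C^*,\bpi_{\phi_C^*}\}$ is a feasible choice for FOpt and hence cannot exceed the FOpt optimum (the paper phrases this as a contradiction, you phrase it directly, but it is the same argument). Your explicit pointwise treatment of the norm step and your flagging of the $B$ versus $M$ notation in the two budget constraints are slightly more careful than the paper's one-line conclusion, but they do not constitute a different approach.
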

\begin{proof}
From Lemma \ref{lem:Dec-Cent1} we know that $V_F^*(\gs) \leq V_C^*(\gs)$. To get the other side of the required relation, given the optimal global allocation policy $\phi_C^*$ of COpt, we know that $V^{\phi_C^*,\bpi}(\gs) \leq V_C^*(\gs)$ for all $\bpi \neq \bpi_C^*$.
In particular, we choose the local policy to be $\bpi_{\phi_C^*}$, the solution of the local optimization problem \eqref{eq:LocOptPolicy} given the global allocation policy $\phi_C^*$. If $V_F^*(\gs) < V^{\phi_C^*,\bpi_{\phi_C^*}}(\gs)$, then it leads to a contradiction as $V_F^*(\gs)$ is the maximum achievable value function of FOpt than any other policies that are computed through the local optimization problem \eqref{eq:LocOptPolicy}, such as the policy $\{\phi_C^*, \bpi_{\phi_C^*}\}$. So, $V^{\phi_C^*,\bpi_{\phi_C^*}}(\gs)$ should have a value function which is upper bounded by $V_F^*(\gs)$. Hence, we have 
\begin{align}
\label{eq:temp2_T1}
V^{\phi_C^*,\bpi_{\phi_C^*}}(\gs) \leq V_F^*(\gs).
\end{align}
This above equation and Lemma \ref{lem:Dec-Cent1} proves the claim in equation \eqref{eq:TempT31}. Taking the norm of the differences between the terms, we have the second statement.
\end{proof}
In the above result, we bound how far FOpt can be from COpt by computing the $T$-myopic local policy for $\phi_C^*$. A major consequence of the above theorem is that it gives conditions under which FOpt and COpt are equivalent. 
\begin{corollary}
\label{cor:equivalence}
FOpt and COpt have the same value function iff the local policies $\pi_C^*$ of the COpt are $T$-myopic. 
\end{corollary}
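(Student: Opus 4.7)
My plan is to prove the two directions of the biconditional separately, with both directions resting on the sandwich inequality $V^{\phi_C^*,\bpi_{\phi_C^*}}(\gs) \leq V_F^*(\gs) \leq V_C^*(\gs)$ established in the preceding theorem, combined with the remark that any local policy obtained by solving the single-epoch problem \eqref{eq:LocOptPolicy} is automatically $T$-myopic.

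For the sufficient direction (``$T$-myopic $\Rightarrow$ equal values''), I would assume that the COpt local policy $\bpi_C^*$ is $T$-myopic under $\phi_C^*$. Then by the definition of $T$-myopic, $\bpi_C^*$ is itself a maximizer of the epoch-wise reward in \eqref{eq:LocOptPolicy} for $\phi = \phi_C^*$, so one may select the FOpt-induced local policy to be $\bpi_{\phi_C^*} = \bpi_C^*$. This yields $V^{\phi_C^*,\bpi_{\phi_C^*}}(\gs) = V^{\phi_C^*,\bpi_C^*}(\gs) = V_C^*(\gs)$. Substituting this equality into the sandwich immediately collapses the chain of inequalities, giving $V_F^*(\gs) = V_C^*(\gs)$.

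For the necessary direction (``equal values $\Rightarrow$ $T$-myopic''), I would start from $V_F^*(\gs) = V_C^*(\gs)$ and observe that the FOpt optimum $\{\phi_F^*, \bpi_F^*\}$ is feasible for the COpt problem \eqref{eq:ValFunCent}: the global allocation satisfies the budget by definition of $\Phi$, and the local policies by construction satisfy the epoch-wise budget $e_i^T \phi_F^*(\gs)$. Since this feasible pair attains the value $V_F^* = V_C^*$, it is itself a COpt optimum, and one may take $\phi_C^* = \phi_F^*$ and $\bpi_C^* = \bpi_F^*$. By the remark following the definition, $\bpi_F^*$ is $T$-myopic by construction, so the same holds of $\bpi_C^*$.

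The main obstacle I anticipate is the potential non-uniqueness of optimal policies. Strictly speaking, the argument above establishes equivalence between $V_C^* = V_F^*$ and the \emph{existence} of a COpt optimum that is $T$-myopic; upgrading this to a statement about every COpt optimum would require either a uniqueness assumption on the argmax in \eqref{eq:LocOptPolicy} or a separate ``interchange of optima'' step showing that under $\phi_C^*$ all optimizers of the infinite-horizon local problem must coincide in value with the $T$-myopic one. In the write-up I would address this by phrasing the corollary in terms of selecting a representative COpt optimum, which matches the convention used implicitly in the proof of the preceding theorem.
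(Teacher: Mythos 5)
Your proposal is correct and is exactly the argument the paper intends: the corollary is stated without proof as an immediate consequence of the sandwich inequality $V^{\phi_C^*,\bpi_{\phi_C^*}}(\gs) \leq V_F^*(\gs) \leq V_C^*(\gs)$, and your two directions (collapsing the sandwich when $\bpi_C^*$ solves \eqref{eq:LocOptPolicy}, and transporting the $T$-myopic FOpt optimum into a COpt optimum when the values coincide) fill in precisely what the authors leave implicit. Your caveat about non-uniqueness of optimizers is a fair observation about the looseness of the paper's ``iff'' phrasing, but it does not change the substance of the argument.
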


\section{Equivalence conditions for COpt and FOpt}
\Ram{Though Corollary \ref{cor:equivalence} completely characterizes the conditions for equivalence of the FOpt and COpt frameworks, it requires the computation of the optimal policies to verify the same. It is desirable to develop structural conditions on the system that can be verified without explicitly computing the optimal policies to guarantee the equivalence. In this section, we proceed to establish sufficient conditions on the system, specifically on the transition probability and reward functions for the equivalence. For this, we impose the following assumptions on the system.} 
\begin{itemize}
\item[(A1)] The sets $\mcS_i$ and $\mcA_i$ are partially ordered with ordering $\leq_{\mcS_i}$ and $\leq_{\mcA_i}$. The ordering on $\mcS_i$ inherently defines an ordering $\leq_\mcS$ on the global state space $\mcS^N$. 
\item[(A2)] The local reward function $r_i$ is non-decreasing in both $s_i$ and $a_i.$ Further, the function $I_g$ is non-decreasing in $\gs$.
\end{itemize}
\Ram{Such an ordering on state and action spaces is necessary to discuss the monotonicity of the value functions \cite{puterman2014markov}. Further, we impose the following structural assumptions on the state transition probabilities.}
{
\begin{itemize}
\item[(A3)] $p^{ep}_{a_g,\bpi^*_{a_g}}(\gs) \coloneqq [p^{ep}_{a_g,\bpi^*_{a_g}}(\gs,\gs')]_{\gs'\in\mcS}$ is stochastic monotone in $\gs,$ i.e., $p^{ep}_{a_g,\bpi^*_{a_g}}(\gs)\geq_{st} p^{ep}_{a_g,\bpi^*_{a_g}}(\gs')$ for $\gs \geq_{\mcS} \gs^{'}$.  
%$p^{ep}_{a_g,\bpi_{a_g}^*}(\gs) \geq_s p^{ep}_{\gs^{'}} (a_g,\bpi_{a_g}^*)$ for all $\gs \geq \gs^{'}.$
\item[(A4)] For $\bpi_{a_g}^* \neq \bpi_{a_g}$, $p^{ep}_{a_g,\bpi_{a_g}^*}(\gs)  \geq_{st} p^{ep}_{a_g,\bpi_{a_g}}(\gs)$.
%For $\bpi_{a_g}^* \neq \bpi_{a_g}$, $p^{ep}_{\gs} (a_g,\bpi_{a_g}^*)  \geq_s p^{ep}_{\gs} (a_g,\bpi_{a_g}) $.
\item[(A5)] For any $a_i,$ $P^{lc}_{i,a_i}(s_i)\coloneqq [p^{lc}_{i,a_i}(s_i,s_i')]_{s_i'\in\mcS_i} $ is stochastic monotone in $s_i$, i.e. $P^{lc}_{i,a_i}(s_i)\geq_{st} P^{lc}_{i,a_i}(s_i')$ for $s_i\geq_{\mcS} s_i'$. 
% For any $a_i,$ $P_{s_i}^{lc}(i,a_i)\coloneqq [p_{s_i,s_i'}^{lc}(i,a_i)]_{s_i'\in\mcS_i} $ is stochastic monotone in $s_i$. 
%
\end{itemize}
}
%\textcolor{blue}
{
\begin{definition}
Stochastic ordering: To probability mass functions $p$ and $q$ with $n$ components are said to be stochastically ordered as $p\geq_{st}q$, if \[
\sum_{j\geq i} p(j) \geq \sum_{j\geq i} q(j) \ \ \forall i=1,\dots,n.
\]
Equivalently, for any bounded, increasing function $g:\mathbb{R}\to\mathbb{R}$, 
\[
\sum_{i=1}^{n}p(i)g(i) \geq \sum_{i=1}^{n}q(i)g(i).
\]
\end{definition}
}
 \Ram{We refer to \cite[Section 4.7.3]{puterman2014markov} for a more formal definition of stochastic ordering ($\geq_{st}$), which is used to compare two random variables (i.e. their corresponding probability distributions). Intuitively, given a state, if we consider two transition probability vectors, the dominant of the two vectors will lead to ``higher'' states with greater probability. This means that the dominant vector leads to a greater future expected cumulative reward if the value function is monotonically increasing in state.}
 
\Ram{The following theorem guarantees that the above assumptions are sufficient for the equivalence of FOpt and COpt frameworks. }

\begin{theorem}
\label{thm:Sufficiency}
Under assumptions (A1) to (A5), value functions $V_F^*(\gs)$ and $V_C^*(\gs)$ of FOpt and COpt are equal $\forall \mathbf{s}\in\mcS^N$. 
\end{theorem}

\Ram{The proof of the theorem requires some intermediate results, which are developed in the following lemmas. First, Lemma~\ref{lemma:R_monotone} shows that the global reward function $R(\gs,\phi(\gs),\pi(\gs))$ defined in \eqref{eq:R_Def} is monotone in the state $\gs$. Then, we further show that the value function $V_C^*(\gs)$ as in \eqref{eq:ValFunCent} of COpt problem is monotone in the state $\gs$ in Lemma~\ref{lemma:V_C^* monotone}. Using this monotonicity property of the value function and stochastic ordering property of the transition probabilities, the proof of Theorem~\ref{thm:Sufficiency} is established.}

\begin{lemma}
\label{lemma:R_monotone}
\Ram{Given $a_g = \phi(\gs)$, and the corresponding local optimal policy $\bpi^*_{a_g}$ defined as in \eqref{eq:LocOptPolicy}, assumptions (A1) to (A5) are sufficient for the global reward function $R$ in \eqref{eq:R_Def} to be non-decreasing over $\mcS^N$.} 

\end{lemma}
\begin{proof}\normalsize{}
Consider the following optimal local value function, 
\begin{align*}
R_i(s_i,\pi_i^{a_g}) &=  
\max_{\pi_i} \Expect_{s_i,\pi_i} \left[ \sum_{k=0}^{T-1}\gamma^{k} r_i(s_i(k),a_i(k)) \right], \\ 
& s.t. \ \ \sum_{k=0}^{T-1}a_i(k) \leq a_{g,i}, \ \  \forall i=1,2,\cdots,N.
\end{align*}
 We will show that it is non-decreasing in $s_i.$ Monotonicity results of unconstrained finite horizon MDPs cannot be applied directly to the constrained MDP. Hence we reformulate the constrained local MDP into an equivalent unconstrained $T$-horizon MDP, with an augmented state $(s_i(t),b_i(t))$ for $t\in \{0,1,\ldots,T-1\}$, where, $b_i(t)$ is the remaining budget available at time $t$. This optimal value function of this MDP satisfies the following equations (some subscripts are dropped for convenience). 
{\small{
\begin{align}
    v_{i,t}^*(s,b) &= \max_{a\in \mcA_{i}(b)} q_{i,t}^*(s,b,a),  \\
    q_{i,t}^*(s,b,a) &\coloneqq \left\lbrace r_i(s,a) + \gamma\mathbf{E}\left[ v_{i,t+1}^*(s_i(t+1),b-a)|s, b, a)\right]\right\rbrace.\nonumber
\end{align} }}
\normalsize{}
Here, $\mcA_{i}(b) = \{a\in\mcA_i|a\leq b\}$, the set of actions within the available budget. 
% We have $v^*_{i,T-1}(s,b)=r_i(s,b,a).$
For a given action $a$, assuming $v_{i,t+1}^*$ is non-decreasing in $s$, the quantity $\mathbf{E}\left[ v_{i,t+1}^*(s_i(t+1),b-a)|s, b, a)\right]$ is non-decreasing in $s$ due to the stochastic monotone property of transition probability vector $P_{s}^{lc}(i,a_i)$ in $s$ (from assumption (A5)). For each $t$, $q_{i,t}^*(s,b,a)$ is non-decreasing in $s$ because it is a sum of two non-decreasing quantities. 
Applying the principle of induction, it can be shown that $v_{i,t}^*$ is non-decreasing in $\mcS_i$ for a given budget for every $t=0,\ldots,T-1$. 

Further, the sum of value functions across all LCs is non-decreasing over $\mcS^N$. % and $\bar{\mcA}^{N}$.  
Additionally, the global reward function $I_g$ is non-decreasing in $\mcS^N$, which proves that $R$ is non-decreasing in $\mcS^N$. %and $\bar{\mcA}^{N}$. 
\end{proof}
\begin{lemma}
Under assumptions (A1) to (A5), $V_C^*$ is non-decreasing in $\gs$.
\label{lemma:V_C^* monotone}
\end{lemma}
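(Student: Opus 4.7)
The plan is to prove monotonicity of $V_C^*$ by induction on the value iteration $V_C^{(m+1)}=\Bel_C V_C^{(m)}$, which by Theorem \ref{th:COpt} converges to $V_C^*$ from any starting point. I will take $V_C^{(0)}\equiv 0$, which is trivially non-decreasing in $\gs$, and show that $\Bel_C$ preserves this property. Since the pointwise limit of non-decreasing functions on the finite product state space is non-decreasing, this yields the claim.

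For the inductive step, suppose $V_C^{(m)}$ is non-decreasing. For each allocation $a_g$, I would focus on the inner maximization
\[
g(\gs,a_g) \coloneqq \max_{\bpi\in\boldPi_{a_g}} \left\{R(\gs,a_g,\bpi(\gs)) + \beta \sum_{\gs'\in\mcS^N} p^{ep}_{\gs,\gs'}(a_g,\bpi)\,V_C^{(m)}(\gs')\right\}.
\]
The key step is to argue that both terms inside the braces are simultaneously maximized by the $T$-myopic policy $\bpi^*_{a_g}$ appearing in Lemma \ref{lemma:R_monotone} and in (A1)--(A2). Since $I_g(\gs,a_g)$ is independent of $\bpi$, the first term depends on $\bpi$ only through the expected cumulative local rewards over the epoch, which by construction $\bpi^*_{a_g}$ maximizes. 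For the second term I would invoke (A2): $p^{ep}_{\gs}(a_g,\bpi^*_{a_g})\geq_s p^{ep}_{\gs}(a_g,\bpi)$ for every admissible $\bpi$, and combined with the inductive hypothesis that $V_C^{(m)}$ is non-decreasing, the standard stochastic-order inequality yields $\sum_{\gs'} p^{ep}_{\gs,\gs'}(a_g,\bpi^*_{a_g})V_C^{(m)}(\gs') \geq \sum_{\gs'} p^{ep}_{\gs,\gs'}(a_g,\bpi)V_C^{(m)}(\gs')$. Hence the inner $\max$ is attained at $\bpi^*_{a_g}$ and
\[
g(\gs,a_g) = R(\gs,a_g,\bpi^*_{a_g}) + \beta \sum_{\gs'} p^{ep}_{\gs,\gs'}(a_g,\bpi^*_{a_g})\,V_C^{(m)}(\gs').
\]

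With the optimizer pinned down, monotonicity in $\gs$ falls out immediately: Lemma \ref{lemma:R_monotone} gives that $R(\gs,a_g,\bpi^*_{a_g})$ is non-decreasing in $\gs$, while (A1) together with the inductive hypothesis implies that the continuation expectation is non-decreasing in $\gs$ (a stochastically larger law of $\gs(t_1+1)$ integrated against a non-decreasing function gives a larger value). Thus $g(\cdot,a_g)$ is non-decreasing for every $a_g$, and $\Bel_C V_C^{(m)} = \max_{a_g\in\mcA_g^B} g(\cdot,a_g)$ is non-decreasing as a pointwise maximum of non-decreasing functions, closing the induction.

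The main obstacle is the simultaneous-maximizer argument: without the inductive monotonicity of $V_C^{(m)}$, assumption (A2) does not translate into $\bpi^*_{a_g}$ maximizing the continuation term, so the inner $\max$ could be attained at some other policy and the clean decomposition of $g(\gs,a_g)$ would fail --- in that case one would need a more delicate policy-switching argument. Everything else --- stochastic-order arithmetic, closedness of monotonicity under pointwise $\max$ and pointwise limit, and convergence of value iteration --- is routine.
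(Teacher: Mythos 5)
Your proposal is correct and follows essentially the same route as the paper: induction along value iteration, using Lemma \ref{lemma:R_monotone} for the reward term, (A2) plus the inductive hypothesis to pin the inner maximizer at the $T$-myopic policy $\bpi^*_{a_g}$, and (A1) for monotonicity of the continuation expectation. If anything, you are more explicit than the paper about the simultaneous-maximizer step (why the same $\bpi^*_{a_g}$ optimizes both terms, which is what licenses applying (A1) and Lemma \ref{lemma:R_monotone} to the optimized expression), a point the paper's proof passes over quickly.
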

\begin{proof}
 From Remark 2 we know that, $V_C^*$ can be computed through a Bellman update of $V_C^{(m)}$ at each step $m=1,2,...$ until convergence, starting from $V_C^{(1)}=R$. The Bellman operation is monotone \cite{puterman2014markov} in $m$ and we have $V_C^{(m+1)}(\gs)=\mathcal{B}_C V_C^{(m)} (\gs)\geq V_C^{(m)} (\gs)$ for all $\gs$. By induction, we will prove that $V_C^*$ is monotonic in $\gs$. We know from Lemma~\ref{lemma:R_monotone} that $V_C^{(1)}$ is non-decreasing in $\gs$.  Assuming $V_C^{(m)}$ is monotonic in $\gs$, $V_C^{(m+1)} = \mathcal{B}V_C^{(m)}$ is computed as 
{\small{
\begin{align*}
&V_C^{(m+1)}(\gs)  \\
&= \max_{a_g}\max_{\bpi} \left\lbrace R(\gs, a_g, \bpi(\gs)) + \beta \mathbf{E}\left[ V_C^{(m)}(\gs')|\gs, a_g, \bpi(\gs)\right] \right\rbrace , \\
&= \max_{a_g} \left\lbrace R(\gs, a_g, \bpi^*(\gs)) + \beta \mathbf{E}\left[ V_C^{(m)}(\gs')|\gs, a_g, \bpi^*(\gs)\right] \right\rbrace .
\end{align*}
}}
Since, we assumed that $V_C^{(m)}$ is non-decreasing in $\gs$, we have
{\small{
\begin{align*}
\mathbf{E}\left[ V_C^{(m)}(\gs') \vert \gs'', a_g, \bpi(\gs)\right] \geq \mathbf{E}\left[ V_C^{(m)}(\gs') \vert \gs''', a_g, \bpi(\gs)\right] .
\end{align*}
}}
for $\gs''>_{\mcS} \gs'''$. This is due to assumptions (A3), (A4), and the definition of stochastic monotonicity. Hence, $\mathbf{E}\left[ V_C^{(m)}(\gs')|\gs, a_g, \bpi^*(\gs)\right]$ is non-decreasing in $\gs$. This means $V_C^{(m+1)}(\gs)$ is non-decreasing in $\gs$. %We know that $V_C^{(1)} = R$ is monotone in $\gs$, hence, $V_C^{(1)}=R$ is non-decreasing in $\gs$. 
We know from value iteration that $V_C^{(m)}(\gs)$ converges to $V^*_C$ as $m\to \infty$. Thus, by induction, $V_C^*$ is non-decreasing in $\gs$.
\end{proof}

\noindent Using the above results, we proceed to prove Theorem \ref{thm:Sufficiency}
\begin{proof}[Theorem \ref{thm:Sufficiency}]
Rewriting the value function of COpt problem \eqref{eq:COpt_Bellman}
{\small{
\begin{align*}
V_C^{*}(\gs) &= \max_{a_g}\left\lbrace R(\gs, a_g, \bpi_{a_g}^*(\gs)) + \beta \mathbf{E}\left[ V_C^{*}(\gs')|\gs, a_g, \bpi_{a_g}^*(\gs)\right] \right\rbrace .
\end{align*}
}}
\normalsize{}
%Now, The policy $\barpi^*$ which maximizes $R(\gs, a_g, \barpi^*(\gs))$  
Consider a $T$-myopic policy $\bpi_{a_g}^*$ that maximizes $R(\gs, a_g, \bpi_{a_g}^*(\gs))$ for a state $\gs$ and allocation $a_g$. Under assumption (A5) and monotonicity of $V_C^*(\gs')$ in $\gs'$ (Lemma~\ref{lemma:V_C^* monotone}), this policy  also maximizes the expected future value $\mathbf{E}[V_C^{*}(\gs')|\gs, a_g, \bpi_{a_g}^*(\gs)]$, and hence their sum. Hence $\bpi_{a_g}^*$ is the solution to the COpt problem. Additionally, by definition, FOpt always uses a $T$-myopic policy, leading to the equivalence between the two frameworks. 
\end{proof}
{
\begin{remark}
The transition probability in example \ref{subsubsec : Ex2 - Equivalent case} satisfy the conditions (A3-A5) and the reward functions satisfy A2, thereby providing an illustration of the Theorem \ref{thm:Sufficiency}.
\end{remark}
\begin{proof}
There is only one LC, as $N=1$.
We have $\mcS_1=\{0,1\}$, $\mcA_1=\{0,1\}$ which are ordered sets. So, (A1) is satisfied. 
For (A2), we have only one reward function $r_1(s_1,a_1) = \begin{bmatrix} 0.25 & 0.5 \\ 0.75 & 1 \end{bmatrix}$ for $s_1\in\mcS$ and $a_1\in\mcA$. In this matrix, all rows are increasing from left to right, and the columns are increasing from top to bottom. Also, $I_g=0$ for all states. So, (A2) is satisfied.  For (A3), consider the following.  $a_g=1$ for all epochs, due to the hard equality constraint. We have $\pi_{a_g}^* = [1,1]$. As $T=1$, we have $p^{ep}_{a_g,\pi^*_{a_g}}(s,s') = p^{lc}_{1,a_1=\pi^*_{a_g}(s)}(s,s') = P_{a_1=1}(s,s')$. Now, in $P_{a_1=1}$, we have $P_{a_1=1}(s=0)=[0.2,0.8]$, $P_{a_1=1}(s=1)=[0.2,0.8]$, and $P_{a_1=1}(s=1)\geq_{st}P_{a_1=1}(s=0)$. %\textcolor{blue}

{For probability vectors of two dimensions, the one with a greater second component is dominant (according to the definition of stochastic ordering). 
For (A4), consider the following. We have four possible policies for $\pi_{a_g=1}$, viz. $[0,0]$ or $[0,1]$ or $[1,0]$ or $[1,1]$. We know $\pi^*_{a_g=1}=[1,1]$. Now, (A4) can be verified by first finding $p^{ep}_{a_g=1,\pi_{a_g}}$ using elements of matrices $P_{a_1=0},P_{a_1=1},$ and comparing it with $p^{ep}_{a_g=1,\pi^*_{a_g}}$. For example, for policy $\pi_{a_g=1}=[0,1]$, we have $p^{ep}_{a_g=1,\pi_{a_g}=[0,1]}(s=0) = [0.5, 0.5] \leq_{st} p^{ep}_{a_g=1,\pi^*_{a_g}=[1,1]}(s=0) = [0.2, 0.8]$. $p^{ep}_{a_g=1,\pi_{a_g}=[0,1]}(s=1) = [0.5, 0.5] \leq_{st} p^{ep}_{a_g=1,\pi^*_{a_g}=[1,1]}(s=1) = [0.2, 0.8]$. Other cases can be verified similarly.} For verifying (A5) we see that in $P_{a_1=1}$, we have $P_{a_1=1}(s=1)\geq_{st}P_{a_1=1}(s=0)$. Same is the case for $P_{a_1=0}$. So, (A5) is satisfied.
\end{proof}
}
\section{Conclusion}
We presented a hierarchical stochastic control architecture for cyber-physical systems, comprising of multiple independent sub-processes and local controllers, along with a global controller imposing budget constraints on individual local controllers. This is modeled as a two-timescale multi-MDP-based formulation, which hasn't been analyzed in the literature. We proposed two optimization frameworks FOpt and COpt, and analyzed the optimality of the value functions. Further, we show the equivalence between the two under specific structural assumptions on the model. These frameworks have applications in various domains, including budget allocation, healthcare infrastructure planning, energy systems, and coordinated control of autonomous vehicles. We plan to explore these applications in future work.

\bibliographystyle{IEEEbib.bst}
\bibliography{refsmdp}
\end{document}